\newcommand{\sqsub}{\,\raisebox{-.5ex}
  {$\stackrel{\textstyle\sqsubset}{\scriptstyle{\sim}}$}\,}
  \newcommand{\aasg}{\,\raisebox{0.065ex}{:}{=}\,}
\newcommand{\sbr}[1]{\lbrack \! \lbrack #1 \rbrack \! \rbrack}
\newcommand{\san}[1]{\langle #1 \rangle}
\newcommand{\ibr}[1]{[ #1 \rangle}
\newcommand{\sq}{\textit{q}}
\newcommand{\sr}{\textit{run}}
\newcommand{\sok}{\textit{ok}}
\newcommand{\srd}{\textit{read}}
\newcommand{\sd}{\textit{done}}
\newcommand{\sw}{\textit{write}}
\newcommand{\sa}{\textit{abort}}
\newtheorem{proposition}{Proposition}
\newtheorem{theorem}{Theorem}
 \theoremstyle{definition}
   \newtheorem{example}{Example}
\title{Symbolic Representation of Algorithmic Game Semantics }
\author{Aleksandar S. Dimovski
\institute{Faculty of Information-Communication Tech., FON University, Skopje, 1000, MKD }
\email{aleksandar.dimovski@fon.edu.mk}
}
\begin{document}
\maketitle

\begin{abstract}
In this paper we revisit the regular-language representation of game semantics
of second-order recursion free Idealized Algol with infinite data types.
By using symbolic values instead of concrete ones  we generalize the standard notion of regular-language and
automata representations to that of corresponding symbolic representations.
In this way terms with infinite data types, such as integers, can be expressed as finite symbolic-automata
although the standard automata interpretation is infinite.
Moreover, significant reductions of the state space of game semantics models are obtained.
This enables efficient verification of terms, which is illustrated with several examples.
\end{abstract}

\section{Introduction}

Game semantics \cite{AM1,AM2,Hyl.PCF} is a technique for compositional
modelling of programming languages, which gives both sound and
complete (fully abstract) models.
Types are interpreted by \emph{games} (or arenas) between a Player,
which represents the term being modelled, and an Opponent, which represents the environment
in which the term is used.
The two participants strictly alternate to make moves, each of which is either a question (a demand for
information) or an answer (a supply of information).
Computations (executions of terms) are interpreted as \emph{plays} of a game,
while terms are expressed as \emph{strategies}, i.e.\ sets of plays, for a game.
It has been shown that game semantics model can be given certain kinds of
concrete automata-theoretic representations \cite{DL,GM,GMO},
and so it can serve as a basis for software model checking and program analysis.
However, the main limitation of model checking in general is that it can be applied only if a finite-state
model is available. This problem arises when we want to handle
terms with infinite data types.

Regular-language representation  of game semantics of second-order
recursion-free Idealized Algol with finite data types provides
algorithms for automatic verification of a range of properties, such as
observational-equivalence, approximation, and safety.
It has the disadvantage that in the presence of infinite integer
data types the obtained automata become infinite state,
i.e. regular-languages have infinite summations, thus losing their algorithmic properties.
Similarly, large finite data types are likely to make the automata infeasible.
In this paper we redefine the (standard) regular-language representation \cite{GM}
at a more abstract level so that terms with infinite data types
can be represented as finite automata, and so various program properties
can be checked over them.
The idea is to transfer attention from the standard form of automata
to what we call symbolic automata.
The representation of values constitutes the main difference between
these two formalisms.
In symbolic automata, instead of assigning concrete values to
identifiers occurring in terms, they are left as symbols.
Operations involving such identifiers will also be left as symbols.
Some of the symbols will be guarded by boolean expressions, which
indicate under which conditions these symbols can be performed.

The paper is organised as follows. The language
we consider here is introduced in Section \ref{lang}.
Symbolic representation of algorithmic game semantics is defined in Section \ref{symbol}.
Its correctness and suitability for verification of safety properties are shown in Section \ref{corr}.
In Section \ref{array} we discuss some extensions of the language, such as arrays,
and how they can be represented in the symbolic model.
A prototype tool, which implements this translation, as well as some examples
 are desribed in Section \ref{app}.


\paragraph{Related work.} By representing game semantic models as symbolic automata,
we obtain a predicate abstraction \cite{Graf,Das} based method for verification.
In \cite{BG} it was also developed a predicate abstraction from game semantics.
This was enabled by extending the models produced using game semantics such that
the state (store) is recorded explicitly in the model by using so-called stateful plays.
However, in our work we achieved predicate abstraction in a more natural way
without changing the game semantic models, and also for terms with infinite data types.

Symbolic techniques, in which data is not represented explicitly but symbolically,
have found a number of applications. For example, symbolic execution and
verification of programs \cite{S1}, symbolic program analysis \cite{S2}, and symbolic operational
semantics of process algebras \cite{S3}.

\section{The Language} \label{lang}

Idealized Algol (IA) \cite{AM1,AM2} is a well studied language
which combines call-by-name $\lambda$-calculus with
the fundamental imperative features and locally-scoped variables.
In this paper we work with its second-order recursion-free fragment
(IA$_2$ for short).

The data types $D$ are integers and booleans ($D ::= \mathsf{int} \, \mid \, \mathsf{bool}$).
The base types $B$ are expressions, commands, and variables
($B ::= \mathsf{exp} D \, \mid \, \mathsf{com} \, \mid \, \mathsf{var} D
$). We consider only first-order function types $T$
($T ::= B \mid B \rightarrow T$).

Terms are formed by the following grammar:
\begin{equation*}
\begin{array}{l}
M ::= \! x \mid v \mid \mathsf{skip} \mid M\, \mathsf{op} \, M \mid M ; \!\! M \mid
 \mathsf{if}\,M\,\mathsf{then}\,M \,\mathsf{else}\, M \mid \mathsf{while}\,M\,\mathsf{do}\,M   \\
 \ \ \mid  M := M \mid !M \mid \mathsf{new}_D \: x\!:=\!v \: \mathsf{in} \: M
  \mid \mathsf{mkvar}_D MM \mid \! \lambda x . M \mid MM
\end{array}
\end{equation*}
where $v$ ranges over constants of type $D$.
Expression constants are infinite integers and booleans.
The standard arithmetic-logic operations $\mathsf{op}$ are employed.
We have the usual imperative constructs: sequential composition,
conditional, iteration, assignment, de-referencing, and ``do nothing'' command $\mathsf{skip}$.
Block-allocated local variables are introduced by a $new$ construct,
which initializes a variable and makes it local to a given block.
The constructor $\mathsf{mkvar}$ is used for creating ``bad'' variables.
We have the standard functional constructs
for function definition and application.
\emph{Well-typed terms} are given by typing judgements of the form $\Gamma
\vdash M : T$, where $\Gamma$ is a type \emph{context} consisting of
a finite number of typed free identifiers, i.e. of the form $x_1 : T_1, \ldots, x_k : T_k$.
Typing rules of the language are given in \cite{AM1,AM2}.

The operational semantics of our language is given for terms $\Gamma \vdash M:T$,
such that all identifiers in $\Gamma$ are variables, i.e.\  $\Gamma = x_1 : \mathsf{var}D_1, \ldots, x_k :
\mathsf{var}D_k$.
It is defined by a big-step reduction
relation:
\[\Gamma \vdash M,\mathrm{s} \Longrightarrow V,\mathrm{s}'\]
where $\mathrm{s}$, $\mathrm{s}'$ represent the \emph{state} before and
after reduction. The state is a function
assigning data values to the variables in $\Gamma$.
We denote by $V$ terms in \emph{canonical form} defined by
$V ::= \, x \, \mid \, v  \, \mid \, \lambda x. M
\, \mid \, \mathsf{skip} \, \mid \, \mathsf{mkvar}_{D}MN$.
Reduction rules are standard (see \cite{AM1,AM2} for details).

Given a term $\Gamma \vdash M : \mathsf{com}$, where all identifiers in $\Gamma$ are variables,
we say that $M$ \emph{terminates} in
state $\mathrm{s}$, written $M,\mathrm{s} \Downarrow$,
if $\Gamma \vdash M,\mathrm{s} \Longrightarrow
\mathsf{skip},\mathrm{s}'$ for some state $\mathrm{s}'$.
If $M$ is a closed term then we abbreviate the relation $M, \emptyset \Downarrow$
 with $M \Downarrow$.
We say that
a term $\Gamma\vdash M:T$ is an \emph{approximate}
of a term $\Gamma\vdash N:T$, denoted by $\Gamma\vdash M \sqsub N$, if and
only if for all terms-with-hole $C[-]:\mathsf{com}$, such that
$\vdash C[M]:\mathsf{com}$ and $\vdash C[N]:\mathsf{com}$ are
well-typed closed terms of type $\mathsf{com}$,
 if $C[M]\Downarrow$  then $C[N]\Downarrow$.
If two terms approximate each other they are considered
\emph{observationally-equivalent}, denoted by
$\Gamma\vdash M \cong N$.

\section{Symbolic Game Semantics} \label{symbol}

We start by introducing a number of syntactic categories necessary
for construction of symbolic automata.
Let $Sym$ be a countable set of symbolic names, ranged over by
upper case letters X, Y, Z.
For any finite $W \subseteq Sym$, the function $new(W)$ returns
a minimal symbolic name which does not occur in $W$, and sets $W:=W \cup new(W)$.
A minimal symbolic name not in $W$ is the one which occurs earliest
in a fixed enumeration $X_1, X_2, \ldots$ of all possible symbolic names.
A set of expressions $Exp$, ranged over by $e$, is defined as follows:
\begin{equation*}
\begin{array}{l}
e ::= \! a \mid b \\
a ::= \! n \mid X^{int} \mid a \, \mathsf{op} \, a \\
b ::= \! tt \mid ff \mid X^{bool} \mid a \, = \, a \mid a \, \leq \, a \mid \neg b \mid b \land b \\
\end{array}
\end{equation*}
where $a$ ranges over arithmetic expressions ($AExp$), and $b$ over boolean
expressions ($BExp$).
We use superscripts to denote the data type of a symbolic name $X$.
We will often omit to write them, when they are clear from the context.

Let $\mathcal{A}$ be an alphabet of letters.
We define a \emph{symbolic alphabet} $\mathcal{A}^{sym}$ induced by $\mathcal{A}$ as follows:
\[
\mathcal{A}^{sym} = \mathcal{A} \cup \{ ?X, e \mid X \in Sym, e \in Exp \}
\]
The letters of the form $?X$ are called \emph{input symbols}.
They generate new symbolic names, i.e.\ $?X$ means $\mathsf{let} \, X=new(W) \, \mathsf{in} \ldots$.
We use $\alpha$ to range over $\mathcal{A}^{sym}$.
Next we define a \emph{guarded alphabet} $\mathcal{A}^{gu}$ induced by
$\mathcal{A}$ as the set of pairs of boolean conditions and symbolic letters, i.e.\ we have:
\[
\mathcal{A}^{gu} = \{ \ibr{b, \alpha} \mid b \in BExp, \alpha \in \mathcal{A}^{sym} \}
\]
A guarded letter $\ibr{b, \alpha}$ means that the symbolic letter $\alpha$
occurs only if the boolean $b$ evaluates to true, i.e.
$if \, (b=tt) \, then \, \alpha \, else \, \emptyset$.
We use $\beta$ to range over $\mathcal{A}^{gu}$.
We will often write $\alpha$ for the guarded letter $\ibr{tt, \alpha}$.
A word $\ibr{b_1, \alpha_1} \cdot \ibr{b_2, \alpha_2} \ldots \ibr{b_n, \alpha_n}$
over guarded alphabet $\mathcal{A}^{gu}$ can be represented as a pair $\ibr{b, w}$,
where $b=b_1 \land b_2 \land \ldots \land b_n$ is a boolean
and $w=\alpha_1 \cdot \alpha_2 \ldots \alpha_n$ is a word of symbolic letters.


We now show how IA$_2$ with infinite integers is interpreted by
symbolic automata, which will be denoted by extended regular expressions.
For simplicity the translation is defined for terms in $\beta$-normal form.
If a term has $\beta$-redexes, it is first reduced to $\beta$-normal form
syntactically by substitution.
In this setting, types (arenas) are represented as \emph{guarded alphabets} of
moves, plays of a game as \emph{words} over a guarded
alphabet, and strategies as \emph{symbolic automata} (regular languages) over a
guarded alphabet. The symbolic automata and regular languages, denoted by $\mathcal S(R)$
and $\mathcal L(R)$ respectively, are specified
using \emph{extended regular expressions} $R$. They are defined
inductively over finite guarded alphabets $\mathcal A^{gu}$ using the following
operations:
\begin{equation*}
\begin{array}{c}
\emptyset \quad \varepsilon \quad \beta \quad R \cdot R' \quad R^* \quad
R+R' \quad R \cap R' \\
R \mid_{\mathcal A'^{gu}} \quad R [R' / w] \quad R^{\san{\alpha}} \quad R'
\comp_{\mathcal{B}^{gu}} R \quad R \bowtie R'
\end{array}
\end{equation*}
where $R,R'$ ranges over extended regular expressions,
$\mathcal{A}^{gu},\mathcal{B}^{gu}$ over finite guarded alphabets, $\beta \in \mathcal
A^{gu}$, $\alpha \in \mathcal A^{sym}$,
$\mathcal A'^{gu} \subseteq \mathcal A^{gu}$ and $w \in \mathcal A^{gu*}$.

Constants $\emptyset$, $\varepsilon$ and $\beta$ denote the languages
$\emptyset$, $\{ \varepsilon \}$ and $\{ \beta \}$, respectively.
Concatenation $R \cdot R'$, Kleene star $R^*$, union $R+R'$ and
intersection $R \cap R'$ are the standard operations. Restriction $R
\mid_{\mathcal A'^{gu}}$ replaces all symbolic letters from $\mathcal A'^{gu}$
with $\varepsilon$ in all words of $R$, but keeps all boolean conditions.
Substitution $R [R' / w]$ is the
language of $R$ where all occurrences of the subword $w$ have been
replaced by the words of $R'$.  Given two symbols $\alpha \in \mathcal A^{sym}$,
$\beta \in \mathcal A^{gu}$, $\beta^{\san{\alpha}}$ is a new letter obtained by tagging the
latter with the former.
If a letter is tagged more than once, we write $(\beta^{\san{\alpha_1}})^{\san{\alpha_2}}=\beta^{\san{\alpha_2,\alpha_1}}$.
We define the alphabet $\mathcal{A}^{gu \san{\alpha}} = \{ \beta^{\san{\alpha}} \mid \beta \in
\mathcal A^{gu} \}$. 
Composition of regular expressions $R'$ defined over $\mathcal{A}^{gu \san{1}} + \mathcal{B}^{gu \san{2}}$
and $R$ over $\mathcal{B}^{gu \san{2}} + \mathcal{C}^{gu \san{3}}$ is given as
follows:
\begin{equation*}
\begin{array}{l}
R' \comp_{\mathcal{B}^{gu \san{2}}} R = \{ w \big[ \ibr{b \land b_1 \land b_2 \land b'_1 \land b'_2 \land \alpha_1=\alpha'_1 \land \alpha_2=\alpha'_2,s \,}^{\san{1}} / \\
\qquad \qquad \ibr{ b_1, \alpha_1 }^{\san{2}} \cdot \ibr{b_2, \alpha_2 }^{\san{2}} \big] \ \mid  \ w \in R,
 \ibr{ b'_1, \alpha'_1 }^{\san{2}} \cdot \ibr{b,s}^{\san{1}} \cdot \ibr{b'_2, \alpha'_2 }^{\san{2}} \in R' \}
\end{array}
\end{equation*}
where $R'$ is a set of words of form $\ibr{ b'_1, \alpha'_1 }^{\san{2}} \cdot \ibr{b,s}^{\san{1}} \cdot \ibr{b'_2, \alpha'_2 }^{\san{2}}$,
such that $\ibr{b'_1, \alpha'_1 }^{\san{2}}$, $\ibr{ b'_2, \alpha'_2 }^{\san{2}} \in \mathcal{B}^{gu \san{2}}$ and $\ibr{b,s}$ contains only letters
from $\mathcal A^{gu \san{1}}$. 
So all letters of $\mathcal{B}^{gu \san{2}}$ are removed from the composition,
which is defined over the alphabet $\mathcal{A}^{gu \san{1}} + \mathcal{C}^{gu \san{3}}$.
The shuffle operation of two regular
languages is defined as $\mathcal L(R) \bowtie \mathcal L(R') =
\bigcup_{w_1 \in \mathcal L(R),w_2 \in \mathcal L(R')} w_1 \bowtie
w_2 $, where $w \bowtie \varepsilon = \varepsilon \bowtie w = w$ and
$a \cdot w_1 \bowtie b \cdot w_2 = a \cdot (w_1 \bowtie b
\cdot w_2) + b \cdot (a \cdot w_1 \bowtie w_2)$.
It is a standard result that any extended
regular expression obtained from the operations above denotes a
regular language \cite[pp.\ 11--12]{GM}, which can be recognised
by a finite (symbolic) automaton \cite{Automata}.

Each type $T$ is interpreted by a guarded alphabet of moves $\mathcal A_{\sbr{T}}^{gu}$
induced by $\mathcal A_{\sbr{T}}$.
The alphabet $\mathcal A_{\sbr{T}}$ contains two kinds of moves:
\emph{questions} and \emph{answers}.
They are defined as follows.
\begin{align*}
& \mathcal{A}_{\sbr{\mathsf{int}}} = \{ \ldots, -n,-n+1, \ldots, n,n+1, \ldots \} \qquad  \mathcal{A}_{\sbr{\mathsf{bool}}} = \{ tt, ff \} \\
& \mathcal{A}_{\sbr{\mathsf{exp}D}} = \{ \sq \} \cup \mathcal{A}_{\sbr{D}} \qquad \mathcal{A}_{\sbr{\mathsf{com}}} = \{ \sr, \sd \}  \\
& \mathcal{A}_{\sbr{\mathsf{var}D}} = \{ \srd, \sw(a), a, \sok \, \mid \, a
\in \mathcal{A}_{\sbr{D}} \} \\
& \mathcal{A}_{\sbr{B_1^{\san{1}} \to \ldots \to B_k^{\san{k}} \to B}}^{gu} = \displaystyle{\sum_{1 \leq i \leq k}} \mathcal{A}_{\sbr{B_i}}^{gu \, \san{i}} + \mathcal{A}_{\sbr{B}}^{gu}
\end{align*}
Note that function types are tagged by a superscript ($\san{i}$) in order to keep
record from which type, i.e.\ which component of the disjoint union, each move comes from.
The letters in the alphabet $\mathcal A_{\sbr{T}}$ represent
\emph{moves} (observable actions) that a term of type $T$ can perform.
For example, in $\mathcal{A}_{\sbr{\mathsf{exp}D}}$ there is a question move
$\sq$ to ask for the value of the expression, and values from $\mathcal{A}_{\sbr{D}}$ to answer the question.
For commands, in $\mathcal{A}_{\sbr{\mathsf{com}}}$ there is a question move
$\sr$ to initiate a command, and an answer move $\sd$ to signal
successful termination of a command.
For variables, we have moves for writing to the variable, $\sw(a)$,
acknowledged by the move $\sok$, and for reading from the variable, a question move $\srd$,
and corresponding to it an answer from $\mathcal{A}_{\sbr{D}}$.

For any ($\beta$-normal) term, we define
a regular-language which represents its game semantics,
i.e.\ its set of complete plays.
Every complete play represents the observable effects
of a completed computation of the given term.
It is given as a guarded word $\ibr{b,w}$, where
the boolean $b$ is also called \emph{play condition}.
Assumptions about a play (computation) to be feasible are
recorded in the play condition.
For infeasible plays, the play condition is inconsistent (unsatisfiable),
thus no assignment of concrete values to symbolic names exists that makes
the play condition true.
So it is desirable for any play to check the consistency (satisfiability) of its play condition.
If the play condition is found to be inconsistent, this play is discarded from the
final model of the corresponding term.
The regular expression  for $\Gamma \vdash M : T$ is denoted $\sbr{\Gamma \vdash M : T}$,
and it is defined over the guarded alphabet $\mathcal A_{\sbr{\Gamma \vdash T}}^{gu}$
defined as:
\begin{equation*}
\mathcal A_{\sbr{\Gamma \vdash T}}^{gu} = \big( \sum_{x : T' \in
\Gamma} \mathcal{A}_{\sbr{T'}}^{gu \, \san{x}} \big) +
\mathcal{A}_{\sbr{T}}^{gu}
\end{equation*}
Free identifiers $x:T \in \Gamma$ are represented by the
copy-cat regular expressions given in Table~\ref{free.id},
which contain all possible behaviours of terms of that type.
They provide a generic closure of an open program term.
For example, $x:\mathsf{exp}D^{\san{x}} \vdash x:\mathsf{exp}D$ is modelled
by the word $\sq \cdot \sq^{\san{x}} \cdot ?X^{\san{x}} \cdot X$.
Its meaning is that Opponent starts the play by asking what is the value
of this expression with the move $\sq$, and Player responds by playing $\sq^{\san{x}}$
(i.e.\ what is the value of the non-local expression $x$). Then Opponent provides
the value of $x$ by using a new symbolic name $X$, which will be also the value
of this expression.
Languages $R^{\san{x,i}}_{B}$ contain plays representing a function which
evaluates its $i$-th argument.

\begin{table}
\fbox{
\begin{minipage}{89ex}
$ \begin{array}{@{}l}
 \sbr{\Gamma, x:B_1^{\san{x,1}} \to \ldots B_k^{\san{x,k}} \to \mathsf{exp}D^{\san{x}} \vdash x : B_1^{\san{1}} \to \ldots B_k^{\san{k}} \to \mathsf{exp}D} =
  \sq \cdot \sq^{\san{x}} \cdot \big( \sum_{1 \leq i \leq k} R^{\san{x,i}}_{B_i} \big)^* \cdot ?X^{\san{x}} \cdot X \\
\sbr{\Gamma, x:B_1^{\san{x,1}} \to \ldots B_k^{\san{x,k}} \to \mathsf{com}^{\san{x}} \vdash x : B_1^{\san{1}} \to \ldots B_k^{\san{k}} \to \mathsf{com}} = \\
 \qquad \qquad \qquad \qquad \qquad \qquad \qquad \qquad \qquad \qquad \sr \cdot \sr^{\san{x}} \cdot \big( \sum_{1 \leq i \leq k} R^{\san{x,i}}_{B_i} \big)^* \cdot \sd^{\san{x}} \cdot \sd \\
  \sbr{\Gamma, x:B_1^{\san{x,1}} \to \ldots B_k^{\san{x,k}} \to \mathsf{var}D^{\san{x}} \vdash x : B_1^{\san{1}} \to \ldots B_k^{\san{k}} \to \mathsf{var}D} = \\
 \qquad  \big( \srd \cdot \srd^{\san{x}} \cdot \big( \sum_{1 \leq i \leq k} R^{\san{x,i}}_{B_i} \big)^* \cdot ?Z^{\san{x}} \cdot Z \big) + \big( \sw(?Z') \cdot \sw(Z')^{\san{x}} \cdot \big( \sum_{1 \leq i \leq k} R^{\san{x,i}}_{B_i} \big)^* \cdot \sok^{\san{x}} \cdot \sok \big) \\
 R^{\san{x,i}}_{\mathsf{exp}D} = \sq^{\san{x,i}} \cdot \sq^{\san{i}} \cdot ?Z^{\san{i}} \cdot Z^{\san{x,i}} \\
  R^{\san{x,i}}_{\mathsf{com}} = \sr^{\san{x,i}} \cdot \sr^{\san{i}} \cdot \sd^{\san{i}} \cdot \sd^{\san{x,i}} \\
  R^{\san{x,i}}_{\mathsf{var}D} = \! ( \srd^{\san{x,i}} \cdot \srd^{\san{i}} \cdot ?Z^{\san{i}} \cdot Z^{\san{x,i}} ) \! +
\!  ( \sw(?Z')^{\san{x,i}} \cdot \sw(Z')^{\san{i}} \cdot \sok^{\san{i}} \cdot \sok^{\san{x,i}} )\\
 \end{array} $
\end{minipage}
} \caption{Free Identifiers} \label{free.id}
\end{table}

Note that whenever an input symbol $?X$ is met in a play,
a new symbolic name is created, which binds all occurrences of $X$
that follow in the play until a new $?X$ is met.
For example,
$\sbr{f:\mathsf{expint}^{\san{f,1}} \to \mathsf{expint}^{\san{f}} \vdash f : \mathsf{expint}^{\san{1}} \to \mathsf{expint}} =  \sq \cdot \sq^{\san{f}} \cdot \big( \sq^{\san{f,1}} \cdot \sq^{\san{1}} \cdot ?Z^{\san{1}} \cdot Z^{\san{f,1}} \big)^* \cdot ?X^{\san{f}} \cdot X$
is a model for a non-local function $f$ which may evaluate its argument zero or more times.
The play corresponding to $f$ which evaluates its argument two times
is given as: $\sq \cdot \sq^{\san{f}} \cdot \sq^{\san{f,1}} \cdot \sq^{\san{1}} \cdot Z_1^{\san{1}} \cdot Z_1^{\san{f,1}} \cdot
\sq^{\san{f,1}} \cdot \sq^{\san{1}} \cdot Z_2^{\san{1}} \cdot Z_2^{\san{f,1}} \cdot X^{\san{f}} \cdot X$.
Note that letters tagged with $f$ represent the actions of calling and
returning from the function, while letters tagged with $f.1$ are the actions caused
by evaluating the first argument of $f$.

In Table~\ref{csp.rl1} terms are interpreted by regular
expressions describing their sets of complete plays.
An integer or boolean constant is modeled by a play where the
initial question $\sq$ is answered by the value of that constant.
The only play for $\mathsf{skip}$ responds to $\sr$ with $\sd$.
A composite term $\mathsf{c}(M_1,\ldots,M_k)$ consisting of a language
construct `$\mathsf{c}$' and subterms $M_1,\ldots,M_k$ is interpreted by
composing the regular expressions for $M_1,\ldots,M_k$, and a regular expression for `$\mathsf{c}$'.
The representation of language constructs `$\mathsf{c}$' is given in Table~\ref{csp.rl2}.
In the definition for local variables, a `cell' regular expression $\gamma_{v}^{x}$
is used to remember the initial and the most-recently written value into the variable $x$.
Notice that all symbols used in Tables~\ref{free.id},\ref{csp.rl1},\ref{csp.rl2} are
of data type $D$, except the symbol $Z$ in $\mathsf{if}$ and $\mathsf{while}$
constructs, which is of data type $bool$.

\begin{table}
\fbox{
\begin{minipage}{89ex}
$ \begin{array}{@{}l}
 \sbr{\Gamma \vdash v : \mathsf{exp}D} = \sq \cdot v \\
 \sbr{\Gamma \vdash \mathsf{skip} : \mathsf{com}} = \sr \cdot \sd \\
 \sbr{\Gamma \vdash \mathsf{c}(M_1,\ldots,M_k) : B'} = \sbr{\Gamma \vdash M_1 : B_1^{\san{1}}} \comp_{\mathcal{A}_{\sbr{B_1}}^{gu \, \san{1}}}  \cdots \\
 \qquad \qquad \qquad \qquad \qquad \qquad \cdots \ \sbr{\Gamma \vdash M_k : B_k^{\san{k}}} \comp_{\mathcal{A}_{\sbr{B_k}}^{gu \, \san{k}}} \sbr{\mathsf{c}:B_1^{\san{1}} \times \ldots B_k^{\san{k}} \to B'}  \\
 \sbr{\Gamma \vdash M N : T} = \sbr{\Gamma \vdash N : B^{\san{1}}} \comp_{\mathcal{A}_{\sbr{B}}^{gu \, \san{1}}} \sbr{\Gamma \vdash M : B^{\san{1}} \to T} \\
 \sbr{\Gamma \vdash \mathsf{new}_D \, x:=v \, \mathsf{in} \, M:B} = \big( \sbr{\Gamma,x\!:\!\mathsf{var}D \vdash M} \cap ( \gamma_{v}^{x} \bowtie \mathcal{A}_{\sbr{\Gamma \vdash B}^{gu \, *}}  ) \big) \!\! \mid_{\mathcal{A}_{\sbr{\mathsf{var}D}}^{\san{x}}} \\
 \qquad \gamma_{v}^{x} = (\srd^{\san{x}} \cdot v^{\san{x}})^* \cdot \big( \sw(?Z)^{\san{x}}
 \cdot \sok^{\san{x}} \cdot (\srd^{\san{x}} \cdot Z^{\san{x}})^*
 \big)^* \\
\end{array} $
\end{minipage}
} \caption{Language terms} \label{csp.rl1}
\end{table}

\begin{table}[h]
\fbox{
\begin{minipage}{89ex}
$ \begin{array}{l}
 \sbr{\mathsf{op} : \mathsf{exp}D_1^{\san{1}} \times \mathsf{exp}D_2^{\san{2}} \to \mathsf{exp}D } =
 \sq \cdot \sq^{\san{1}} \cdot ?Z^{\san{1}} \cdot \sq^{\san{2}} \cdot ?Z'^{\san{2}} \cdot (Z \, \mathsf{op} \, Z') \\

 \sbr{\mathsf{;} : \mathsf{com}^{\san{1}} \times \mathsf{com}^{\san{2}} \to \mathsf{com} } =
  \sr \cdot \sr^{\san{1}} \cdot \sd^{\san{1}} \cdot \sr^{\san{2}} \cdot \sd^{\san{2}} \cdot \sd \\

 \sbr{\mathsf{if} : \mathsf{expbool}^{\san{1}} \times \mathsf{com}_1^{\san{2}} \times \mathsf{com}_2^{\san{3}} \to \mathsf{com} } =
  \ibr{tt, \sr }  \cdot \ibr{tt, \sq^{\san{1}} }  \cdot \ibr{tt, ?Z^{\san{1}} } \cdot \\
 \qquad \qquad \qquad \qquad \qquad \qquad \qquad \big( \ibr{Z, \sr^{\san{2}} } \cdot \ibr{tt, \sd^{\san{2}} } + \ibr{\neg Z, \sr^{\san{3}} } \cdot \ibr{tt, \sd^{\san{3}} } \big) \cdot \ibr{tt, \sd } \\

 \sbr{\mathsf{while} : \mathsf{expbool}^{\san{1}} \times \mathsf{com}^{\san{2}} \to \mathsf{com} } =
   \ibr{tt, \sr }  \cdot \ibr{tt, \sq^{\san{1}} }  \cdot \ibr{tt, ?Z^{\san{1}} } \cdot \\
  \qquad \qquad \qquad \qquad \qquad \qquad \qquad \big( \ibr{Z, \sr^{\san{2}} } \cdot \ibr{tt, \sd^{\san{2}} } \cdot \ibr{tt, \sq^{\san{1}} }  \cdot \ibr{tt, ?Z^{\san{1}} } \big)^{*} \cdot \ibr{\neg Z, \sd}  \\

 \sbr{\mathsf{:=} : \mathsf{var}D^{\san{1}} \times \mathsf{exp}D^{\san{2}} \to \mathsf{com} } =
   \sr \cdot \sq^{\san{2}} \cdot ?Z^{\san{2}} \cdot \sw(Z)^{\san{1}} \cdot \sok^{\san{1}} \cdot \sd \\

 \sbr{\mathsf{!} : \mathsf{var}D^{\san{1}} \to \mathsf{exp}D } =
  \sq \cdot \srd^{\san{1}} \cdot ?Z^{\san{1}} \cdot Z  \\
\end{array} $
\end{minipage}
} \caption{Language constructs} \label{csp.rl2}
\end{table}

We define an effective alphabet of a regular expression to be the set
of all letters appearing in the language denoted by that regular expression.
Then we can show.

\begin{proposition}
For any term $\Gamma \vdash M:T$, the effective alphabet of $\sbr{\Gamma \vdash M:T}$
is a finite subset of $\mathcal{A}_{\sbr{\Gamma \vdash T}}^{gu}$.
\end{proposition}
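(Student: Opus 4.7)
My plan is to proceed by structural induction on the $\beta$-normal form term $M$. The statement has two aspects that I would verify at each case: containment of the effective alphabet in $\mathcal{A}_{\sbr{\Gamma \vdash T}}^{gu}$, and finiteness. Containment is essentially a bookkeeping check on the tags $\san{x}$, $\san{i}$, $\san{x,i}$ that mark which summand of the disjoint union a given letter comes from; finiteness is the substantive content.

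For the base cases, I would read the effective alphabets off the rules: $\sbr{\Gamma \vdash v : \mathsf{exp}D} = \sq \cdot v$ uses the two letters $\sq, v \in \mathcal{A}_{\sbr{\mathsf{exp}D}}^{gu}$; $\sbr{\Gamma \vdash \mathsf{skip} : \mathsf{com}} = \sr \cdot \sd$ uses two letters in $\mathcal{A}_{\sbr{\mathsf{com}}}^{gu}$; and for each free identifier $x : T' \in \Gamma$, Table~\ref{free.id} assembles a fixed finite set of guarded letters — one opening/closing pair for the outer arena $T'$, plus the auxiliary expressions $R^{\san{x,i}}_{B_i}$ for each argument position — each of them carrying the correct $\san{x,\ldots}$ tag for membership in $\mathcal{A}_{\sbr{T'}}^{gu\,\san{x}}$, and on the right in $\mathcal{A}_{\sbr{T}}^{gu}$.

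For the inductive step, for $\mathsf{c}(M_1,\ldots,M_k)$ and application $MN$, the IH gives finite effective alphabets for each subterm in the appropriate subarena, and the language construct $\sbr{\mathsf{c}:\cdots}$ from Table~\ref{csp.rl2} is a fixed regular expression with a finite effective alphabet in the product arena. I then have to check that every regular operation used in Tables~\ref{csp.rl1},\ref{csp.rl2} preserves finiteness: concatenation, union, Kleene star, intersection, restriction and shuffle only ever produce letters already present in their operands; composition $\comp_{\mathcal{B}^{gu}}$ does synthesise new guarded letters whose booleans have the shape $b \land b_1 \land b_2 \land b'_1 \land b'_2 \land \alpha_1 = \alpha'_1 \land \alpha_2 = \alpha'_2$, but since each conjunct is drawn from the finite sets of booleans and symbolic names already present in the operand effective alphabets, only finitely many such compounds can arise. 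For $\mathsf{new}_D\, x := v\,\mathsf{in}\, M$, the cell $\gamma_v^x$ contributes only finitely many $\san{x}$-tagged letters (the constant $v^{\san{x}}$ plus writes $\sw(Z)^{\san{x}}$ for the finitely many symbolic names $Z$ generated there), and the final restriction $\mid_{\mathcal{A}_{\sbr{\mathsf{var}D}}^{\san{x}}}$ erases exactly the $x$-tagged traces, leaving an alphabet contained in $\mathcal{A}_{\sbr{\Gamma \vdash B}}^{gu}$.

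The hardest part, and the whole point of the symbolic representation, is controlling the contribution of the infinite data alphabet $\mathcal{A}_{\sbr{\mathsf{int}}}$. I would isolate this as an auxiliary observation, verified by inspection of all the rules: the only concrete integer (or boolean) values ever written into the language are those appearing syntactically in $M$ — namely the constants $v$ in $\sbr{\Gamma \vdash v : \mathsf{exp}D}$ and the initial values of $\mathsf{new}$ declarations. Every other data position is introduced by an input letter $?X$ and later referred to by occurrences of $X$; since the inductive construction of $\sbr{\Gamma \vdash M : T}$ introduces only finitely many distinct symbolic names, only finitely many distinct input letters, $X$-letters, and compound booleans built over them can appear in the effective alphabet, which therefore remains a finite subset of $\mathcal{A}_{\sbr{\Gamma \vdash T}}^{gu}$.
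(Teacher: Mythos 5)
Your proof is correct and follows the route the paper itself implicitly relies on: the paper states this proposition without an explicit proof, treating it as immediate from inspection of Tables~\ref{free.id}--\ref{csp.rl2} and the fact that the regular operations (including composition, which only combines guards and letters already present in its operands) cannot generate new letters outside a finite stock, which is exactly the induction you spell out. Your isolation of the key point --- that concrete data values enter only through syntactic constants while all other data positions are single input letters $?X$ and their finitely many symbolic reuses --- is precisely the observation that makes the alphabet finite despite $\mathcal{A}_{\sbr{\mathsf{int}}}$ being infinite, so your write-up is a faithful (and more detailed) rendering of the paper's argument rather than a different one.
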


Any term $\Gamma \vdash M : T$ from IA$_2$ with infinite integers is
interpreted by extended regular expression without infinite summations
defined over finite alphabet. So the following is immediate.

\begin{theorem} \label{sym_aut}
For any IA$_2$ term, the set $\mathcal{L} \sbr{\Gamma \vdash M : T}$
is a symbolic regular-language without infinite summations over finite alphabet.
Moreover, a finite symbolic automata $\mathcal S \sbr{\Gamma \vdash M : T}$
which recognizes it is effectively constructible.
\end{theorem}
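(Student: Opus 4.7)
The plan is to proceed by structural induction on the $\beta$-normal term $\Gamma \vdash M : T$, using the compositional definition of $\sbr{\Gamma \vdash M : T}$ from Tables~\ref{free.id}, \ref{csp.rl1}, and \ref{csp.rl2}. The two things to verify at every step are (a) that the extended regular expression built for $M$ contains no infinite summations, and (b) that its language lives inside a finite subset of $\mathcal{A}_{\sbr{\Gamma \vdash T}}^{gu}$, the latter being exactly the content of Proposition~1, which I would invoke directly.

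First I would dispatch the base cases. For a constant $\Gamma \vdash v : \mathsf{exp}D$ the expression $\sq \cdot v$ is trivially finite, and likewise for $\mathsf{skip}$. The crucial base case is free identifiers of type $\mathsf{expint}$, $\mathsf{varint}$, or integer-valued functions: here the \emph{standard} interpretation would require an infinite summation $\sum_{n \in \mathbb{Z}} n^{\san{x}} \cdot n$ to range over all integer answers. In the symbolic setting of Table~\ref{free.id}, this sum is replaced by the single guarded letter $?X^{\san{x}} \cdot X$, where $?X$ generates a fresh symbolic name and $X$ ranges symbolically over all values. The copy-cat expressions for function identifiers are finite unions of such patterns (one per argument), each itself finite, so no infinite summation is introduced; the effective alphabet is manifestly finite.

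For the inductive step, I would split on the outermost construct. Composite terms $\mathsf{c}(M_1, \ldots, M_k)$ and applications $MN$ are built from the inductive hypotheses via the composition operator $\comp_{\mathcal{B}^{gu}}$ together with the finite expressions for language constructs in Table~\ref{csp.rl2}; by inspection those constructs use only concatenation, finite union, and input symbols $?Z$ (never infinite sums over $D$), so they preserve both properties. The $\mathsf{new}_D \, x := v \, \mathsf{in} \, M$ case requires a little more care, since its definition intersects with $\gamma_v^x \bowtie \mathcal{A}_{\sbr{\Gamma \vdash B}}^{gu\, *}$ and then restricts away $\mathcal{A}_{\sbr{\mathsf{var}D}}^{\san{x}}$; here the cell expression $\gamma_v^x$ itself is written using $?Z$ rather than an infinite sum over $D$, and the shuffle with the \emph{effective} alphabet of the enclosing term (finite by Proposition~1, not with the whole $\mathcal{A}_{\sbr{\Gamma \vdash B}}^{gu}$) keeps the construction finitary.

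For the second sentence of the theorem, I would appeal to the standard closure results cited just before the statement: regular languages are effectively closed under all the operations $\emptyset, \varepsilon, \beta, \cdot, {}^*, +, \cap, \mid_{\mathcal{A}'^{gu}}, [R'/w], {}^{\san{\alpha}}, \comp, \bowtie$, so from the inductively constructed extended regular expression one obtains a finite symbolic automaton $\mathcal{S}\sbr{\Gamma \vdash M : T}$ by the usual Thompson/subset-style construction, with letters drawn from the finite effective alphabet. The main obstacle I anticipate is the $\mathsf{new}$ case: one has to argue carefully that although $\mathcal{A}_{\sbr{\Gamma \vdash B}}^{gu}$ could in principle contain infinitely many concrete letters from $\mathsf{var}D$, the shuffle in the definition need only be taken against the \emph{effective} alphabet actually appearing in $\sbr{\Gamma, x\!:\!\mathsf{var}D \vdash M}$, which Proposition~1 guarantees is finite; without this observation the construction would fall outside the class of finite symbolic automata.
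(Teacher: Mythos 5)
Your argument for the first half of the statement (no infinite summations, finite effective alphabet) is fine and matches the paper, which treats that part as essentially immediate from Proposition~1 together with inspection of Tables~\ref{free.id}--\ref{csp.rl2}. The gap is in the second half, effective constructibility. You defer to ``standard closure results'' and a Thompson/subset-style construction, but those cover only concatenation, union, star and ordinary intersection over a fixed alphabet; they do not cover the two operations that carry all the weight here, namely the guarded composition $\comp_{\mathcal{B}^{gu}}$ and the hiding of the local variable in $\mathsf{new}_D$. In the symbolic setting these are not the classical regular operations: composing $R'$ with $R$ must erase the synchronised $\mathcal{B}$-letters while conjoining the play conditions of the two plays \emph{and} adding equality constraints between the symbolic letters being matched (the $\alpha_1=\alpha'_1$, $m_1=m_2$ conjuncts). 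Establishing that this is implementable on finite symbolic automata is precisely what the paper's proof does, by giving explicit $rename$ and $compose$ constructions on automata (tagging the transitions out of the initial state and into final states, then fusing matched question/answer transitions into $\varepsilon$-transitions carrying the conjoined guards). Your proposal never supplies this, and citing Thompson/subset constructions does not.

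The $\mathsf{new}$ case, which you rightly flag as the delicate one, is where the proposal would actually fail. Shuffling against the effective rather than the full alphabet addresses finiteness, but not the real difficulty: the intersection with the cell $\gamma_v^x$ cannot be computed as an ordinary product of symbolic automata, because the cell speaks of its own symbol $Z$ (written by $\sw(?Z)^{\san{x}}$ and read back as $Z^{\san{x}}$) while the model of $\Gamma, x{:}\mathsf{var}D \vdash M$ writes and reads its own symbolic expressions $a'$; as sets of guarded words the two operands share essentially no literal letters, so a naive product construction yields nothing useful. The paper's proof instead imposes the cell constraint by eliminating the $x$-tagged letters outright: it introduces a fresh tracking symbol $X$ with initial condition $?X=v$, replaces each write pair $\sw(a')^{\san{x}}\cdot\sok^{\san{x}}$ by an $\varepsilon$-transition guarded by $?X=a'$ and each read pair $\srd^{\san{x}}\cdot a'^{\san{x}}$ by one guarded by $a'=X$, and then performs a guard-preserving $\varepsilon$-elimination. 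Without this construction (or an argument that symbolic intersection with symbol matching is effective), the claim that $\mathcal{S}\sbr{\Gamma \vdash M : T}$ is effectively constructible is not established; this construction is the substance of the paper's proof and is missing from yours.
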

\begin{proof}
The proof is by induction on the structure of $\Gamma \vdash M : T$.

An automaton is a tuple $(Q,i,\delta,F)$ where $Q$ is the finite set
of states, $i \in Q$ is the initial state, $\delta$ is the transition
function, and $F \subseteq Q$ is the set of final states.
We now introduce two auxiliary operations.
Let $A'=(Q',i',\delta',F')$ be an automaton, then
$A=rename(A',tag)$ is defined as:\\
$\begin{array}{l}
 Q=Q' \qquad i=i' \qquad F=F' \\
 \delta = \{ q_1 \stackrel{\ibr{b,m}}{\longrightarrow} q_2 \in \delta' \mid q_1 \neq i', q_2 \nin F' \} \, +  \\
 \qquad  \{ i' \stackrel{\ibr{b,m^{\san{tag}}}}{\longrightarrow} q  \mid i' \stackrel{\ibr{b,m}}{\longrightarrow} q \in \delta' \} \, +
\{ q_1 \stackrel{\ibr{b,m^{\san{tag}}}}{\longrightarrow} q_2  \mid q_1 \stackrel{\ibr{b,m}}{\longrightarrow} q_2 \in \delta', q_2 \in F' \}
\end{array}$

Let $A_1=(Q_1,i_1,\delta_1,F_1)$ and $A_2=(Q_2,i_2,\delta_2,F_2)$ be two automata,
such that all transitions going out of $i_2$ and going to a state from $F_2$
are tagged with $tag$. Define $A=compose(A_1,A_2,tag)$ as follows:\\
$\begin{array}{l}
 Q=Q_1+Q_2 \backslash \{ i_2, F_2 \} \qquad i=i_1 \qquad F=F_1 \\
 \delta = \{ q_1 \stackrel{\ibr{b,m}}{\longrightarrow} q'_1 \in \delta_1 \mid m \neq n^{\san{tag}} \} \, +  \,
 \{ q_2 \stackrel{\ibr{b,m}}{\longrightarrow} q'_2 \in \delta_2 \mid m \neq n^{\san{tag}} \} \, +  \\
 \qquad \{ q_1 \stackrel{\ibr{b_1 \land b_2 \land m_1=m_2,\varepsilon}}{\longrightarrow} q'_2  \mid q_1 \stackrel{\ibr{b_1,m^{\san{tag}}_{1}}}{\longrightarrow} q'_1 \in \delta_1, i_2 \stackrel{\ibr{b_2,m^{\san{tag}}_{2}}}{\longrightarrow} q'_2 \in \delta_2, \{m_1, m_2\} \, \textrm{are} \ \textrm{questions}  \} \, + \\
 \qquad \{ q_2 \stackrel{\ibr{b_1 \land b_2 \land m_1=m_2,\varepsilon}}{\longrightarrow} q'_1  \mid q_1 \stackrel{\ibr{b_1,m^{\san{tag}}_{1}}}{\longrightarrow} q'_1 \in \delta_1, q_2 \stackrel{\ibr{b_2,m^{\san{tag}}_{2}}}{\longrightarrow} q'_2 \in \delta_2,
 \ q'_2 \in F_2, \{m_1, m_2\} \, \textrm{are} \ \textrm{answers}  \} \\
\end{array}$

Let $A_M$, $A_N$, and $A_{\comp}$ be automata representing $\Gamma \vdash M$,
$\Gamma \vdash N$, and construct $;$ (see Table~\ref{csp.rl2}), respectively. The unique automaton representing
$\Gamma \vdash M \, ; N$ is defined as:
\[
A_{M \, ; N} = compose(compose(A_{\comp},rename(A_M,1),1),rename(A_N,2),2)
\]
The other cases for constructs are similar.

The automaton $A=(Q,i,\delta,F)$ for $\sbr{\Gamma \vdash \mathsf{new}_D \, x:=v \, \mathsf{in} \, M}$
is constructed in two stages. First we eliminate $x$-tagged symbolic letters from
$A_M=(Q_M,i_M,\delta_M,F_M)$, which represents $\sbr{\Gamma, x:\mathsf{var}D \vdash M}$, by replacing
them with $\varepsilon$.
We introduce a new symbolic name $X$ to keep track of what changes to $x$ are made
by each $x$-tagged move.\\
$\begin{array}{l}
 Q_{\varepsilon}=Q_M \qquad i_{\varepsilon}=i_M \qquad F_{\varepsilon}=F_M \\
 \delta_{\varepsilon} \! = \! \{ i_M \stackrel{\ibr{?X=v \land b,m}}{\longrightarrow} q  \mid i_M \stackrel{\ibr{b,m}}{\longrightarrow} q \in \delta_M \} \ + \\
 \qquad \{ q_1 \stackrel{\ibr{b,m}}{\longrightarrow} q_2  \mid q_1 \stackrel{\ibr{b,m}}{\longrightarrow} q_2 \in \delta_M, m \nin \!\! \{ \sw(a)^{\san{x}}, \sok^{\san{x}}, \srd^{\san{x}}, a^{\san{x}} \! \} \}  \\
 \qquad \{ q_1 \stackrel{\ibr{?X=a' \land b_1 \land b_2,\varepsilon}}{\longrightarrow} q_2  \mid \exists q. ( q_1 \stackrel{\ibr{b_1,\sw(a')^{\san{x}}}}{\longrightarrow} q \in \delta_M, q \stackrel{\ibr{b_2,\sok^{\san{x}}}}{\longrightarrow} q_2 \in \delta_M )  \} \\
\qquad \{ q_1 \stackrel{\ibr{a'=X \land b_1 \land b_2,\varepsilon}}{\longrightarrow} q_2  \mid \exists q. ( q_1 \stackrel{\ibr{b_1,\srd^{\san{x}}}}{\longrightarrow} q \in \delta_M, q \stackrel{\ibr{b_2,{a'}^{\san{x}}}}{\longrightarrow} q_2 \in \delta_M )  \} \\
\end{array}$

The final automaton is obtained by eliminating $\varepsilon$-letters from $A_{\varepsilon}$.
Note that conditions associated to $\varepsilon$-letters are not removed.\\
$\begin{array}{l}
 Q=Q_{\varepsilon} \qquad i=i_{\varepsilon} \qquad F=F_{\varepsilon} \\
 \delta =  \big( \{ \delta_{\varepsilon} \backslash \{ q_1 \stackrel{\ibr{b,\varepsilon}}{\longrightarrow} q_2  \mid q_1, q_2 \in Q_{\varepsilon} \} \big) \, +  \,  \{ q_1 \stackrel{\ibr{b \land b_{\varepsilon},m}}{\longrightarrow} q_2 \mid \exists q' \in Q_{\varepsilon}. ( q_1 \stackrel{\ibr{b_{\varepsilon},\varepsilon}^*}{\longrightarrow} q', q' \stackrel{\ibr{b,m}}{\longrightarrow} q_2 )  \} \\
\end{array}$\\
We write $q_1 \stackrel{\ibr{b_{\varepsilon},\varepsilon}^*}{\longrightarrow} q_2$ if $q_2$
is reachable from $q_1$ by a series of $\varepsilon$-transitions $\ibr{b_1,\varepsilon}, \ldots, \ibr{b_k,\varepsilon}$,
where $b_{\varepsilon} = b_1 \land \ldots b_k$.
\end{proof}



\begin{example}
Consider the term $M_1$:
\[
\begin{array}{l}
f : \mathsf{com}^{f,1} \to \mathsf{com}^{f}, abort:\mathsf{com}^{abort}, x: \mathsf{expint}^{x}, y: \mathsf{expint}^{y}  \vdash
 f \big( \mathsf{if} \,  (x \neq y) \, \mathsf{then} \, abort \big) : \mathsf{com}
\end{array}
\]
in which $f$ is a  non-local procedure, and $x$, $y$ are non-local expressions.

\begin{figure}
 \centerline{\scalebox{1.6}{\psfig{figure=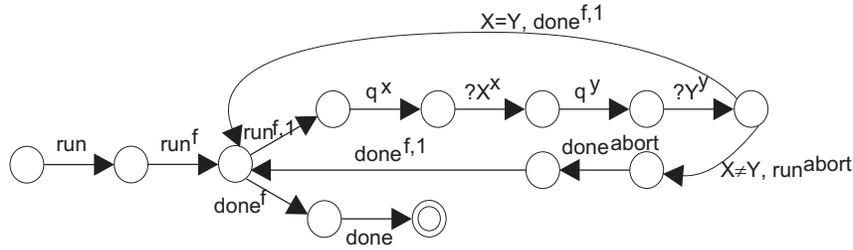}}}
 \caption{The symbolic representation of the strategy for $M_1$. }
\label{warm}
\end{figure}
The strategy for this term represented as a finite symbolic automaton is
shown in Figure~\ref{warm}.
The model illustrates only the possible behaviors of this term: the
non-local procedure $f$ may call its argument, zero or more times,
then the term terminates successfully with $\sd$.
If $f$ calls its argument, arbitrary values for
$x$ and $y$ are read from the environment by using symbols $X$ and $Y$. If they are different ($X \neq Y$),
then the $\mathsf{abort}$ command is executed.
The standard regular-language representation \cite{GM} of $M_1$,
where concrete values are employed, is given in Figure~\ref{standard}.
It represents an infinite-state automaton, and so it is not suitable
for automatic verification (model checking).
Note that, the values for non-local expressions $x$ and $y$ can be any possible integer.
\begin{figure}
 \centerline{\scalebox{1.6}{\psfig{figure=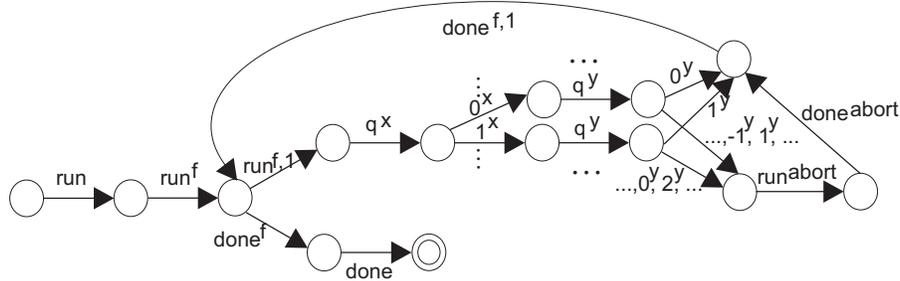}}}
 \caption{The standard representation of the strategy for $M_1$. }
\label{standard}
\end{figure}
$\Box$
\end{example}

\section{Formal Properties} \label{corr}

In \cite[pp.\ 28--32]{GM}, it was shown the correctness of
the standard regular-language representation for finitary IA$_2$  by showing that
it is isomorphic to the game semantics model \cite{AM1}.
As a corollary, it was obtained that the standard regular-language representation
is fully abstract.

Let $\sbr{\Gamma \vdash M:T}^{CR}$ denotes the set of all complete plays
in the strategy for a term $\Gamma \vdash M:T$ from $IA_2$ with infinite integers
obtained as in \cite{GM}, where concrete values in moves and infinite summations
in regular expressions are used.
Suppose that there is a special free identifier $\mathsf{abort}$ of type $\mathsf{com}$.
A term is $\mathsf{abort}$-free if it has no occurrence of $\mathsf{abort}$.
We say that a term is \emph{safe} if for any $\mathsf{abort}$-free term-with-hole $C[-]$,
the term $C[M]$ does not execute the $\mathsf{abort}$ command.
Since the standard regular-language semantics is fully abstract, the following
result is easy to show.
\begin{proposition} \label{safe}
A term $M$ is safe
if $\sbr{\Gamma\vdash M}^{CR}$ does not contain moves from $\mathcal{A}_{\sbr{\mathsf{com}}}^{\mathsf{abort}}$.
\end{proposition}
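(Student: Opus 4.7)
The plan is to leverage the compositionality of the standard regular-language semantics, together with the full-abstraction/soundness result quoted from \cite{GM}, to reduce the operational question (does $C[M]$ ever execute $\mathsf{abort}$?) to a purely semantic one (does the set of complete plays of $C[M]$ contain an $\mathsf{abort}$-tagged move?).

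First I would trace how the hypothesis propagates through any $\mathsf{abort}$-free context. Fix a term-with-hole $C[-]$ that is $\mathsf{abort}$-free and yields a closed command $\vdash C[M] : \mathsf{com}$. By the compositional clauses of Tables~\ref{csp.rl1} and \ref{csp.rl2}, the language $\sbr{\vdash C[M]}^{CR}$ is built from $\sbr{\Gamma \vdash M}^{CR}$ by repeated application of composition $\comp$, restriction $\mid$, shuffle $\bowtie$, intersection with cell expressions, and free-identifier copy-cat languages from Table~\ref{free.id}. None of these operations, nor any of the free-identifier languages attached to identifiers other than $\mathsf{abort}$, introduces letters tagged by $\mathsf{abort}$; such letters can be present in $\sbr{\vdash C[M]}^{CR}$ only because they were already present in $\sbr{\Gamma \vdash M}^{CR}$. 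The hypothesis therefore gives that $\sbr{\vdash C[M]}^{CR}$ contains no move from $\mathcal{A}_{\sbr{\mathsf{com}}}^{\mathsf{abort}}$.

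Second I would invoke soundness of the standard regular-language model with respect to the operational semantics, which follows from the isomorphism with the fully abstract game model proved in \cite{GM}. The free identifier $\mathsf{abort} : \mathsf{com}$ is modelled exactly by the copy-cat language $\sr \cdot \sr^{\san{\mathsf{abort}}} \cdot \sd^{\san{\mathsf{abort}}} \cdot \sd$ from Table~\ref{free.id}, so every operational invocation of $\mathsf{abort}$ during a big-step reduction of $C[M]$ must generate the $\mathsf{abort}$-tagged moves $\sr^{\san{\mathsf{abort}}}$ and $\sd^{\san{\mathsf{abort}}}$ in the corresponding complete play. Since we have just shown that no such play exists, no reduction of $C[M]$ can invoke $\mathsf{abort}$, and $M$ is safe.

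The main obstacle is the passage from soundness phrased at the level of termination (as in \cite{GM}) to soundness at the level of individual moves. I would handle this either by appealing directly to the soundness lemma of \cite{GM} applied to the partial play prefix corresponding to the invocation of $\mathsf{abort}$, or by contrapositive: if $C[M]$ did execute $\mathsf{abort}$, one could instrument the abort-free context to observe the event, for example by taking $C'[-] = C[-]$ placed in a larger harness in which $\mathsf{abort}$ is replaced by a diverging subterm; then $C[M]$ executing $\mathsf{abort}$ would be detected as a termination change, contradicting the equality of models that follows from the first step. Either route yields the proposition.
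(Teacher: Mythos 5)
The paper never spells out a proof of Proposition~\ref{safe}: it is simply asserted to follow easily from full abstraction of the concrete regular-language model. Your two-step plan --- (i) compositionality of the interpretation, so that for an $\mathsf{abort}$-free context every $\mathsf{abort}$-tagged letter of $\sbr{\vdash C[M]}^{CR}$ must already occur in $\sbr{\Gamma\vdash M}^{CR}$, and (ii) soundness/full abstraction to transfer this to the operational level --- is exactly the route the paper gestures at, and step (i) is sound.

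The genuine gap is in step (ii), precisely at the point you flag yourself, and neither of your two patches closes it. The hypothesis constrains only \emph{complete} plays ($\sbr{\cdot}^{CR}$ is the set of complete plays), and an execution that invokes $\mathsf{abort}$ and afterwards diverges leaves no trace there. Concretely, take $\mathsf{abort}:\mathsf{com}\vdash \mathsf{abort};\Omega$ with $\Omega = \mathsf{while}\,tt\,\mathsf{do}\,\mathsf{skip}$ and the empty context: the set of complete plays is empty, so the hypothesis holds and your ``equality of models'' between the instrumented terms is true, yet the term does call $\mathsf{abort}$. Your contrapositive instrumentation then yields no contradiction, because $C[M][\Omega/\mathsf{abort}]$ and $C[M][\mathsf{skip}/\mathsf{abort}]$ both diverge, so there is no termination difference for full abstraction to detect; and your first patch also fails, because absence of $\mathsf{abort}$ moves in the complete plays does not imply their absence in arbitrary play prefixes of the underlying strategy (in the example the prefix $\sr\cdot\sr^{\mathsf{abort}}$ is played even though $\sbr{M}^{CR}=\emptyset$). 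So as written your argument only establishes safety with respect to observable (completed) computations. To finish, you must either make that reading of ``executes $\mathsf{abort}$''/safety explicit --- which is the convention implicitly used here, the paper being silent on the operational meaning of executing $\mathsf{abort}$; then the cleanest argument is that absence of moves from $\mathcal{A}_{\sbr{\mathsf{com}}}^{\mathsf{abort}}$ gives $\sbr{\Gamma\vdash M}^{CR}=\sbr{\Gamma\vdash M[\mathsf{skip}/\mathsf{abort}]}^{CR}$, hence $M\cong M[\mathsf{skip}/\mathsf{abort}]$ by full abstraction, so $M$ is contextually indistinguishable from an $\mathsf{abort}$-free term --- or else carry out the argument at the level of all plays of the strategy rather than only the complete ones recorded by $\sbr{\cdot}^{CR}$.
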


Let $Eval$ be the set of evaluations, i.e.\ the set of total functions from
$W$ to $\mathcal{A}_{\sbr{\mathsf{int}}} \cup \mathcal{A}_{\sbr{\mathsf{bool}}}$.
We use $\rho$ to range over $Eval$.
So we have $\rho(X^D) \in \mathcal{A}_{\sbr{D}}$ for any evaluation $\rho$
and $X^D \in W$.
Given a word of symbolic letters $w$, let $\rho(w)$ be a word where every
symbolic name is replaced by the corresponding concrete value as defined by $\rho$.
Given a guarded word $\ibr{b,w}$, define $\rho(\ibr{b,w})= \rho(w)$ if $\rho(b)=tt$;
otherwise $\rho(\ibr{b,w})= \emptyset$ if $\rho(b)=ff$.
The concretization of a symbolic regular-language over a guarded alphabet is
defined as follows:
$\gamma \, \mathcal{L} (R) = \{ \rho \ibr{b,w} \mid \ibr{b,w} \in \mathcal{L} (R), \rho \in Eval  \}$.
Let $\sbr{\Gamma \vdash M:T}^{SR} = \mathcal{L} \sbr{\Gamma \vdash M:T}$ be the strategy obtained as in Section 3, where
symbols instead of concrete values are used.

\begin{theorem} \label{rl.srl}
For any IA$_2$ term
\begin{equation*}
{\gamma \, \sbr{\Gamma \vdash M:T}^{SR}} \ = \ {\sbr{\Gamma \vdash M:T}}^{CR}
\end{equation*}
\end{theorem}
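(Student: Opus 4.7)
The plan is to proceed by structural induction on the derivation of $\Gamma \vdash M : T$, mirroring the inductive definition of $\sbr{-}^{SR}$ in Tables~\ref{free.id}, \ref{csp.rl1}, and \ref{csp.rl2}, and the corresponding clauses of $\sbr{-}^{CR}$ from \cite{GM}. First I would establish a family of commutation lemmas stating that $\gamma$ distributes over each of the operators used to build symbolic regular expressions: $\gamma(R \cdot R') = \gamma(R) \cdot \gamma(R')$, $\gamma(R + R') = \gamma(R) + \gamma(R')$, $\gamma(R^*) = \gamma(R)^*$, $\gamma(R \cap R') = \gamma(R) \cap \gamma(R')$, $\gamma(R \bowtie R') = \gamma(R) \bowtie \gamma(R')$, $\gamma(R \mid_{\mathcal{A}'^{gu}}) = \gamma(R) \mid_{\gamma(\mathcal{A}'^{gu})}$, $\gamma(R^{\san{\alpha}}) = \gamma(R)^{\san{\alpha}}$, and, crucially, $\gamma(R' \comp_{\mathcal{B}^{gu \san{2}}} R) = \gamma(R') \comp_{\gamma(\mathcal{B}^{gu \san{2}})} \gamma(R)$. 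The first batch follows by unfolding the definitions and noting that each $\rho \in Eval$ acts letter-by-letter. The composition equality requires observing that, under any $\rho$, the symbolic constraint $b \land b_1 \land b_2 \land b'_1 \land b'_2 \land \alpha_1 = \alpha'_1 \land \alpha_2 = \alpha'_2$ appearing in the substitution becomes precisely the requirement that the hidden $\mathcal{B}^{gu \san{2}}$-moves match concretely at the interface, which is exactly the standard ``parallel composition plus hiding'' construction of \cite{GM}.

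Second I would dispose of the base cases. For $v : \mathsf{exp}D$ and $\mathsf{skip} : \mathsf{com}$ the two representations are literally identical, so there is nothing to show. For a free identifier $x$, the symbolic copy-cat of Table~\ref{free.id} differs from the concrete one only in that each Opponent-owned value is replaced by a fresh input symbol $?X$; ranging $\rho$ over all evaluations recovers every integer or boolean value, so $\gamma$ of the symbolic copy-cat coincides with the concrete copy-cat of \cite{GM}.

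Third I would handle the inductive cases by invoking the commutation lemmas. For application $MN$ and for composite constructs $\mathsf{c}(M_1,\ldots,M_k)$, the claim reduces to showing that each $\sbr{\mathsf{c} : B_1^{\san{1}} \times \ldots \times B_k^{\san{k}} \to B'}$ of Table~\ref{csp.rl2} concretizes to the corresponding standard clause. For the guard-free constructs ($;$, $:=$, $!$, $\mathsf{op}$) this is immediate. For $\mathsf{if}$ and $\mathsf{while}$, observe that $\rho(\ibr{Z, \alpha}) = \alpha$ when $\rho(Z) = tt$ and $\rho(\ibr{Z, \alpha}) = \emptyset$ otherwise, so summing over all $\rho$ produces exactly the standard then/else (respectively body/exit) behaviours. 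For $\mathsf{new}_D\, x := v\, \mathsf{in}\, M$, the symbolic cell $\gamma_v^x$ concretizes, as $\rho$ varies, to the union over all possible value traces of the standard ``good-variable'' cells, and the shuffle-intersect-restrict pattern then concretizes pointwise by the commutation lemmas.

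The main obstacle is the composition case, and the same obstacle reappears inside the $\mathsf{new}$ case via $\gamma_v^x$. One has to verify carefully that the existential quantification over evaluations $\rho$ interacts correctly with the freshness of the input symbols $?X$ introduced inside the composed sub-strategies: symbols hidden by the composition (those living on the $\mathcal{B}^{gu \san{2}}$ interface) must be ranged over independently from symbols that remain visible on $\mathcal{A}^{gu \san{1}} + \mathcal{C}^{gu \san{3}}$, since otherwise the concretization would collapse distinct concrete plays. Once this independence is checked (by observing that symbolic names are introduced afresh at each $?X$ and that the scope of an occurrence of $X$ extends only until the next $?X$, as described in Section~\ref{symbol}), the remaining equalities reduce to routine word-level calculations and the theorem follows from the inductive hypotheses together with the commutation lemmas.
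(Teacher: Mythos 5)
Your proposal follows essentially the same route as the paper: induction on the typing rules, with constants, free identifiers and each construct strategy of Tables~\ref{free.id}--\ref{csp.rl2} concretized directly and the inductive step discharged by commuting $\gamma$ with composition (and the $\mathsf{new}$ machinery of cell, shuffle, intersection and restriction); the paper's own proof is simply a much terser version of this, spelling out only the $\mathsf{exp}D$ free-identifier case and declaring the rest similar. One caution: the distributivity lemmas such as $\gamma(R \cdot R') = \gamma(R)\cdot\gamma(R')$ and $\gamma(R \cap R') = \gamma(R)\cap\gamma(R')$ are false without a side-condition when the two parts share symbolic names (e.g.\ $\gamma(?X \cdot X) \neq \gamma(?X)\cdot\gamma(X)$, since a single evaluation $\rho$ must be used for the whole guarded word), so they must be stated under the freshness/scoping discipline on input symbols that you invoke only at the end for the composition case.
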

\begin{proof}
By induction on the typing rules.
The definitions of expression and command constructs are the same.

Consider the case of free identifiers.\\
$\begin{array}{l}
 \gamma \sbr{x:\mathsf{exp}D^{\san{x}} \vdash x:\mathsf{exp}D}^{SR} = \gamma \{ \sq \cdot \sq^{\san{x}} \cdot X^{D \, \san{x}} \cdot X^D  \} \\
\qquad \qquad \qquad \qquad = \{ \sq \cdot \sq^{\san{x}} \cdot \rho(X^D)^{\san{x}} \cdot \rho({X^D}) \mid \rho : \{ X^D \} \to \mathcal{A}_{\sbr{D}} \} \\
\qquad \qquad \qquad \qquad = \{  \sq \cdot \sq^{\san{x}} \cdot v^{\san{x}} \cdot v \mid v \in \mathcal{A}_{\sbr{D}}  \} = \sbr{x:\mathsf{exp}D^{\san{x}} \vdash x:\mathsf{exp}D}^{CR}
\end{array}$\\
The other cases are similar to prove.
\end{proof}
As a corollary we obtain the following result.
\begin{theorem} \label{saf}
${\sbr{\Gamma \vdash M:T}^{SR}}$ is safe iff ${\sbr{\Gamma \vdash N:T}^{CR}}$ is safe.
\end{theorem}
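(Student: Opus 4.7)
The plan is to derive this as an almost immediate corollary of Theorem \ref{rl.srl} together with Proposition \ref{safe}, once we fix the right notion of safety on the symbolic side. I read the theorem as stating that $\sbr{\Gamma \vdash M : T}^{SR}$ is safe (meaning: it contains no guarded play $\ibr{b,w}$ with satisfiable $b$ whose symbolic word $w$ features a move from $\mathcal{A}_{\sbr{\mathsf{com}}}^{\san{\mathsf{abort}}}$) if and only if $\sbr{\Gamma \vdash M : T}^{CR}$ is safe in the sense of Proposition \ref{safe}. The key observation underlying both directions is that the moves in $\mathcal{A}_{\sbr{\mathsf{com}}}^{\san{\mathsf{abort}}}$, namely $\sr^{\san{\mathsf{abort}}}$ and $\sd^{\san{\mathsf{abort}}}$, are tagged constants that carry no symbolic names, hence every evaluation $\rho \in Eval$ fixes them pointwise.

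For the forward direction I argue by contrapositive. Suppose $\sbr{\Gamma \vdash M : T}^{SR}$ is unsafe, so there is some $\ibr{b,w} \in \sbr{\Gamma \vdash M : T}^{SR}$ with $b$ satisfiable and with $w$ containing an abort move $m$. Pick an evaluation $\rho$ witnessing $\rho(b)=tt$; then $\rho\ibr{b,w}=\rho(w)$ lies in $\gamma\,\sbr{\Gamma \vdash M : T}^{SR}$, which by Theorem \ref{rl.srl} equals $\sbr{\Gamma \vdash M : T}^{CR}$. Since $\rho(m)=m$, the concrete word $\rho(w)$ still contains the abort move $m$, so by Proposition \ref{safe} the term is unsafe at the concrete level as well.

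For the converse, assume $\sbr{\Gamma \vdash M : T}^{CR}$ is unsafe, i.e.\ it contains some play $w'$ featuring an abort move $m \in \mathcal{A}_{\sbr{\mathsf{com}}}^{\san{\mathsf{abort}}}$. Again by Theorem \ref{rl.srl}, $w' = \rho\ibr{b,w}$ for some $\ibr{b,w} \in \sbr{\Gamma \vdash M : T}^{SR}$ and some $\rho$ with $\rho(b)=tt$. Because $\rho$ only substitutes for symbolic names and $m$ contains none, the occurrence of $m$ in $\rho(w)$ must already be present in $w$ itself; the satisfiability of $b$ is witnessed by the very $\rho$ we chose. Hence $\sbr{\Gamma \vdash M : T}^{SR}$ contains an unsafe play with satisfiable play condition, and is unsafe.

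The step that deserves the most care, and is essentially the only one that is not bookkeeping, is the observation that abort moves are symbol-free constants and therefore transfer transparently through the concretization map $\rho$; once this is pinned down, both directions reduce to chasing the equality $\gamma\,\sbr{M}^{SR} = \sbr{M}^{CR}$ from Theorem \ref{rl.srl}. I do not foresee a real obstacle beyond stating cleanly what ``safe'' means on the symbolic side, in particular the explicit satisfiability quantifier on the play condition.
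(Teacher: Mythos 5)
Your proposal is correct and follows essentially the same route as the paper, which states this result as an immediate corollary of Theorem~\ref{rl.srl} without spelling out the details. Your elaboration --- fixing symbolic safety as ``no play with satisfiable condition containing an $\mathsf{abort}$-tagged move'' and noting that the moves of $\mathcal{A}_{\sbr{\mathsf{com}}}^{\mathsf{abort}}$ are symbol-free constants preserved in both directions by the concretization $\gamma$ --- is exactly the intended bookkeeping behind the paper's one-line derivation.
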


By Proposition~\ref{safe} and Theorem~\ref{saf} it follows that a term is safe if its symbolic regular-language
semantics is safe.
Since symbolic automata are finite state, it follows that we can use model-checking to verify  safety of
IA$_2$ terms with infinite data types.

In order to verify safety of a term we need to check whether the symbolic automaton representing a term contains
unsafe plays.
We use an external SMT solver Yices \footnote{http://yices.csl.sri.com} \cite{Yices} to determine consistency of the play conditions
of the discovered unsafe plays. If some play condition is consistent,
i.e.\ there exists an evaluation $\rho$ that makes the play condition true,
the corresponding unsafe play is feasible and it is reported as a genuine counter-example.

\begin{example}
The term $M_1$ from Example 1 is $\mathsf{abort}$-unsafe, with the following counter-example:
\[
\sr \ \sr^{f} \  \sr^{f,1} \ \ \sq^{x} \ \ X^{x} \ \ \sq^{y} \ \ Y^{y} \ \ \ibr{X \neq Y,\sr^{abort}} \ \ \sd^{abort} \ \sd^{f,1} \ \sd^{f} \ \sd \]
The consistency of the play condition is established by instructing Yices to
check the formula:
\[
\begin{array} {l}
(define \, X::int) \\
(define \, Y::int) \\
(assert \, (/= \, X \ Y))
\end{array}
\]
The following satisfiable assignments to symbols are reported: $X=1$ and $Y=2$,
yielding a concrete unsafe play:
$\sr \ \sr^{f} \  \sr^{f,1} \ \ \sq^{x} \ \ 1^{x} \ \ \sq^{y} \ \ 2^{y} \ \ \sr^{abort} \ \ \sd^{abort} \ \sd^{f,1} \ \sd^{f} \ \sd$.
$\Box$
\end{example}

\begin{example}
Consider the term $M_2$:
\[
\begin{array}{l}
N : \mathsf{expint}^{N}, abort:\mathsf{com}^{abort} \vdash \, \mathsf{new_{int}} \, x:=0 \ \mathsf{in} \\
\qquad \qquad \qquad \qquad \qquad \qquad \ \mathsf{while} \, (x<N) \ \mathsf{do} \ x:=x+1; \\
\qquad \qquad \qquad \qquad \qquad \qquad \ \mathsf{if} \, (x>0) \ \mathsf{then} \ \mathsf{abort} : \mathsf{com}
\end{array}
\]

The strategy for this term (suitably adapted for readability) is given in Figure~\ref{warm2}.
Observe that the term communicates with its environment using non-local
identifiers $N$ and $\mathsf{abort}$. So in the model will only be represented
actions of $N$ and $\mathsf{abort}$.
Notice that each time the term (Player) asks for a value of $N$ with the move $\sq^{N}$, the
environment (Opponent) provides a new fresh value $?Z$ for it.
The symbol $X$ is used to keep track of the current value of $x$.
Whenever a new value for $N$ is provided, the term has three possible options
depending on the current values of $Z$ and $X$: it can terminate successfully with $\sd$;
it can execute $\sa$ and terminate; or it can run the assignment $x \aasg x+1$ and ask
for a new value of $N$.

\begin{figure}
 \centerline{\scalebox{1.6}{\psfig{figure=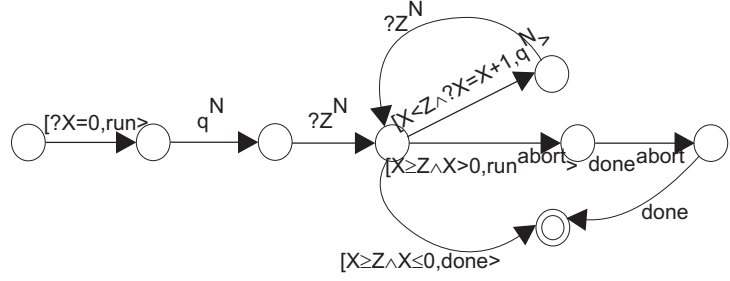}}}
 \caption{The strategy for $M_2$. }
\label{warm2}
\end{figure}

The shortest unsafe play found in the model is:
\[
\ibr{X=0,\sr} \ \sq^{N} \  {Z}^{N} \ \ibr{X \geq Z \land X>0,\sr^{abort}} \ \ \sd^{abort} \ \sd
\]
But the play condition for it, $X=0 \land X \geq Z \land X>0$, is inconsistent.
The next unsafe play is:
\[
\ibr{X_1=0,\sr} \ \sq^{N} \  {Z_1}^{N} \ \ibr{X_1<Z_1 \land X_2=X_1+1,\sq^{N}} \  {Z_2}^{N} \
  \ibr{X_2 \geq Z_2 \land X_2>0,\sr^{abort}} \ \ \sd^{abort} \ \sd
\]
Now Yices reports that the condition for this play is satisfiable,
yielding a possible assignment of concrete values to symbols that makes
the condition true: $X_1=0$, $Z_1=1$, $X_2=1$, $Z_2=0$.
So it is a genuine counter-example, such that one corresponding concrete
unsafe play is: $\sr \cdot \sq^N \cdot 1^N \cdot \sq^N \cdot 0^N \cdot \sr^{abort} \cdot \sd^{abort} \cdot \sd$.
This play corresponds to a
computation which runs the body of while exactly once.

Let us modify the $M_2$ term as follows
\[
\mathsf{new_{int}} \, x:=0 \ \mathsf{in} \  \mathsf{while} \, (x<N) \ \mathsf{do} \ x:=x+1; \,
 \mathsf{if} \, (x>k) \ \mathsf{then} \ \mathsf{abort}
\]
where $k>0$ is any positive integer.
The model for this modified term is the same as shown in Figure~\ref{warm2},
except that conditions associated with letters $\sr^{abort}$ (resp., $\sd$)
are $X \geq Z \land X > k$ (resp., $X \geq Z \land X \leq k$).
In this case the ($k+1$)-shortest unsafe plays in the model are found to
be inconsistent. The first consistent unsafe play corresponds to
executing the body of while ($k+1$)-times, and one possible concrete
representation of it (as generated by Yices) is:
\[
\sr \cdot \sq^N \cdot 1^N \cdot \sq^N \cdot 2^N \cdot \ldots \cdot \sq^N \cdot (k+1)^N \cdot \sq^N \cdot 0^N \cdot \sr^{abort} \cdot \sd^{abort} \cdot \sd
\]
$\Box$
\end{example}

\section{Extensions} \label{array}

We now extend the language with arrays of length $k>0$.
They can be handled in two ways.
Firstly, we can introduce arrays as syntactic sugar by using
existing term formers. An array $x[k]$ is represented as a
set of $k$ distinct variables $x[0]$, $x[1]$, $\ldots$, $x[k-1]$, such that
\[
\begin{tabular}{@{}l}
$x[E] \ \equiv$ \\
$\quad \mathsf{if} \: E=0 \: \mathsf{then} \: x[0] \: \mathsf{else} \: $ \\
$\quad \ \ldots$ \\
$\qquad \mathsf{if} \: E=k-1 \: \mathsf{then} \: x[k-1] \:
\mathsf{else} \: \mathsf{skip} \: (\mathsf{abort})$
\end{tabular}
\]
If we want to verify whether array out-of-bounds errors are present
in the term, i.e.\ there is an attempt to access elements out of the
bounds of an array, we execute $\mathsf{abort}$ instead of $\mathsf{skip}$ when $E \geq k$.
This approach for handling arrays
is taken by the standard representation of game semantics \cite{GM,DL}.

Secondly, since we work with symbols we can have more efficient representation
of arrays with unconstrained length.
While in the first approach the length of an array $k$ must be a concrete positive integer,
in the second approach $k$ can be represented by a symbol.
We use the support that Yices provides for arrays by enabling:
function definitions, function updates, and lambda expressions.
For each array $x[k]:\mathsf{var}D$, we can define a function symbol $X$ ($X: int \to D$)
 in Yices as:
\[
(define \, X::(\to int \, D))
\]
The function symbol $X$ can be initialized and updated as follows:
\[
\begin{array}{l}
(lambda \, (index::int) \, val  ) \\
(update \, X \, (index) \, val)
\end{array}
\]

A non-local array element is expressed as follows.
\[
\begin{array}{l}
\sbr{\Gamma, x[k] \vdash x[E] : \mathsf{var}D} = \sbr{\Gamma \vdash E : \mathsf{expint}^{\san{1}}} \comp_{\mathcal{A}_{\sbr{\mathsf{expint}}}^{gu \, \san{1}}} \sbr{\Gamma, x[k] \vdash x[-] : \mathsf{var}D} \\
\sbr{\Gamma, x[k] \vdash x[-] : \mathsf{var}D} = \srd \cdot \sq^{\san{1}} \cdot ?Z^{\san{1}} \cdot \ibr{Z<k,\srd^{\san{x[Z]}}} \cdot ?Z'^{\san{x[Z]}} \cdot Z' + \\
 \qquad \qquad \qquad \qquad \qquad \sw(?Z') \cdot \sq^{\san{1}} \cdot ?Z^{\san{1}} \cdot \ibr{Z<k,\sw(Z')^{\san{x[Z]}}} \cdot \sok^{\san{x[Z]}} \cdot \sok
\end{array}
\]
If we want to check for array out-of-bounds errors, we extend this interpretation by
including plays that perform moves associated with $\mathsf{abort}$ command when $Z \geq k$.
For example, the de-referencing (reading) part of the interpretation will be given as follows:
\[
\srd \cdot \sq^{\san{1}} \cdot ?Z^{\san{1}} \cdot \big( \ibr{Z<k,\srd^{\san{x[Z]}}} \cdot ?Z'^{\san{x[Z]}} \cdot Z'
\ + \  \ibr{Z \geq k,\sr^{\san{\sa}}} \cdot \sd^{\san{\sa}} \cdot 0  \big)
\]

The automaton $A$ for $\sbr{\Gamma \vdash \mathsf{new}_D \, x[k]:=v \, \mathsf{in} \, M}$,
where $A_M$ represents $\sbr{\Gamma, x[k] \vdash M}$,
is obtained as follows.
We first construct $A_{\varepsilon}$ by eliminating $x$-tagged moves from
$A_M$.
\[\begin{array}{l}
 Q_{\varepsilon}=Q_M \qquad i_{\varepsilon}=i_M \qquad F_{\varepsilon}=F_M \\
 \delta_{\varepsilon} \! = \! \{ i_M \stackrel{\ibr{X(j):=v \land b,m}}{\longrightarrow} q  \mid i_M \stackrel{\ibr{b,m}}{\longrightarrow} q \in \delta_M \} \ + \\
 \qquad \{ q_1 \stackrel{\ibr{b,m}}{\longrightarrow} q_2  \mid q_1 \stackrel{\ibr{b,m}}{\longrightarrow} q_2 \in \delta_M, m \nin \!\! \{ \sw(a)^{\san{x}}, \sok^{\san{x}}, \srd^{\san{x}}, a^{\san{x}} \! \} \}  \\
 \qquad \{ q_1 \stackrel{\ibr{X(a'):=a \land b_1 \land b_2,\varepsilon}}{\longrightarrow} q_2  \mid \exists q. ( q_1 \stackrel{\ibr{b_1,\sw(a)^{\san{x[a']}}}}{\longrightarrow} q \in \delta_M, q \stackrel{\ibr{b_2,\sok^{\san{x[a']}}}}{\longrightarrow} q_2 \in \delta_M )  \} \\
\qquad \{ q_1 \stackrel{\ibr{a=X(a') \land b_1 \land b_2,\varepsilon}}{\longrightarrow} q_2  \mid \exists q. ( q_1 \stackrel{\ibr{b_1,\srd^{\san{x[a']}}}}{\longrightarrow} q \in \delta_M, q \stackrel{\ibr{b_2,{a}^{\san{x[a']}}}}{\longrightarrow} q_2 \in \delta_M )  \} \\
\end{array}
\]
We use $X(j):=v$ to mean that the function symbol $X$ is initialized to $v$ for all its arguments,
while $X(a'):=a$ means that $X$ at argument $a'$ is updated to $a$.
The final automaton $A$ is generated by removing $\varepsilon$-letters from $A_{\varepsilon}$,
similarly as it was done for the case of $\mathsf{new}_D$ in Theorem~\ref{sym_aut}.

\section{Implementation} \label{app}

We have developed a prototype tool in Java, called \textsc{Symbolic GameChecker}, which
automatically converts an IA$_2$ term with integers into a symbolic automaton
which represents its game semantics. The model is then used to verify safety
of the term. Further examples as well as detailed reports of how they execute
 on \textsc{Symbolic GameChecker} are available from:\\
\verb|http://www.dcs.warwick.ac.uk/~aleks/symbolicgc.htm|.

Along with the tool we have also implemented in Java our own library of classes for
working with symbolic automata. We could not just reuse some of the existing libraries
for finite-state automata, due to the specific nature of symbolic automata we use.
The symbolic automata generated by the tool is checked for safety.
We use the breadth-first search algorithm to find the shortest unsafe play in the model.
Then the Yices is called to check consistency of its condition.
If the condition is found to be consistent, the unsafe play is reported; otherwise we
search for another unsafe play. If no unsafe play is discovered or all unsafe plays are
found to be inconsistent, then the term is deemed safe.
The tool also uses a simple forward reachability algorithm to remove all unreachable
states of a symbolic automaton.

Let us consider the following implementation of the linear search algorithm.
\[
\begin{array}{l}
 x[k] \, : \, \mathsf{varint}^{x[-]}, \ y \, : \, \mathsf{expint}^{y}, \ \mathsf{abort} \, : \, \mathsf{com}^{abort} \ \vdash \\
  \qquad \mathsf{new}_{int} \, i \aasg 0 \, \mathsf{in} \\
 \qquad \mathsf{new}_{int} \,  p \aasg y \, \mathsf{in} \\
 \qquad \mathsf{while} \, (i<k) \, \mathsf{do} \, \{ \\
 \qquad \quad \, \mathsf{if} \, (x[i] = p) \, \mathsf{then} \, \mathsf{abort}; \\
 \qquad \quad i := i + 1; \, \} \\
 \qquad : \mathsf{com}
\end{array}
\]

The program first remembers the input expression
$y$ into a local variable $p$.
The non-local array $x$ is then searched for an occurrence
of the value stored in $p$.
If the search succeeds, then $\mathsf{abort}$ is executed.

\begin{figure*}
\centerline{\scalebox{1.6}{\psfig{figure=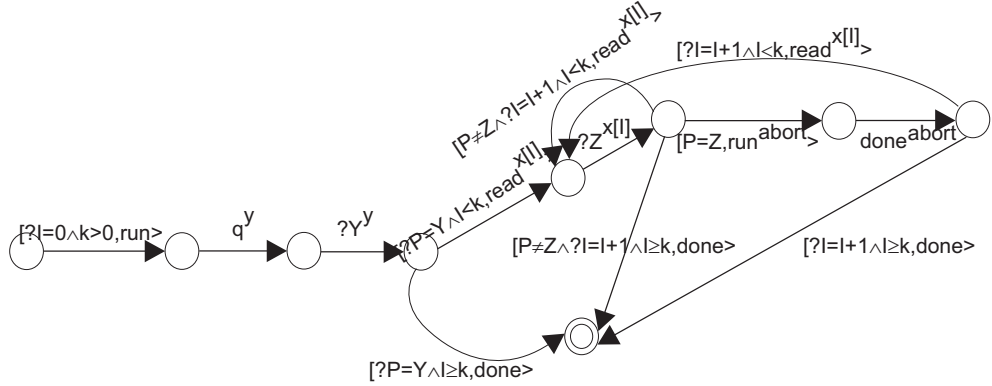}}}
\caption{The symbolic model for linear search. } \label{linear.model}
\end{figure*}

The symbolic model for this term is shown in Fig.~\ref{linear.model}, where for simplicity
array out-of-bounds errors are not taken in the consideration.
If the value read from the environment for $y$ has occurred in $x$, then an unsafe behaviour of the term
exists.
So this term is unsafe, and the following counter-example is found:
\[
\begin{array}{l}
\ibr{I_1=0 \land k>0, \sr} \ \ \sq^{y} \ \ Y^{y} \ \  \ibr{P=Y \land I_1<k,\srd^{x[I_1]}} \  \ Z^{x[I_1]} \\
 \qquad \ibr{Z=P ,\sr^{abort}} \ \ \sd^{abort} \ \ \ibr{I_2=I_1+1 \land I_2 \geq k, \sd}
\end{array}
\]
This play corresponds to a term with an array $x$ of size $k=1$, where
the values read from $x[0]$ and $y$ are equal.

Overall, the symbolic model for linear search term has 9 states and the total time
needed to generate the model and test its safety is less than 1 sec.
We can compare this approach with the tool in \cite{DL}, where the standard representation
based on CSP process algebra of terms with finite data types is used.
We performed experiments for the linear search term with different sizes of $k$ and
all integer types replaced by finite data types. The types of $x$, $y$, and $p$ is $int_n$,
i.e. they contain $n$ distinct values $\{0, \ldots n-1 \} $, and the type of the index $i$ is $int_{k+1}$,
i.e. one more than the size of the array.
Such term was converted into a CSP process \cite{DL}, and then the FDR model checker was used
to generate its model and test its safety.
Experimental results are shown in Table~\ref{results}, where
we list the execution time in seconds,
and the size of the final model in number of states.
The model and the time increase very fast as we increase the sizes of $k$ and $n$.
We ran FDR and \textsc{Symbolic GameChecker} on a Machine AMD
Phenom II X4 940 with 4GB RAM.

\begin{table}
\begin{center}
{\renewcommand{\arraystretch}{1}
\begin{tabular}{| r | c | c | c | c | c | c |} \hline
& \multicolumn{2}{| c |}{$n=2$} & \multicolumn{2}{| c |}{$n=3$}  \\ \cline{2-5} $k$ & Time & Model
 & Time & Model  \\
 \hline 1 & $<1$ & 11 & $<1$ & 13  \\
 \hline 5 & 1 & 43 & 1 & 61
\\ \hline 10 & 2 & 83 & 2 & 121
\\ \hline 15 & 5 & 123
& 6 & 181   \\
\hline
\end{tabular}}
\caption{Verification of the linear search with finite data
}\label{results}
\end{center}
\end{table}

\section{Conclusion} \label{concl}

We have shown how to reduce the verification of safety of game-semantics infinite-state
models of IA$_2$ terms to the checking of the more abstract finite symbolic automata.

Counter-example guided abstraction refinement procedures (ARP) \cite{DGL.SAS,DGL.SPIN}
can also be used for verification of terms with infinite integers.
However, they find solutions after performing a few iterations in order to adjust
integer identifiers to suitable abstractions. In each iteration, one abstract term
is checked. If an abstract term needs larger abstractions, then it is likely
to obtain a model with very large state space, which is difficult (infeasible)
to generate and check automatically.
The symbolic approach presented in this paper provides solutions in only one
iteration, by checking symbolic models which are significantly smaller than
 the abstract models in ARP. The possibility to handle arrays with
 unconstrained length is another important benefit of this approach.
 Extensions to nondeterministic \cite{D10} and concurrent \cite{GMO} terms can be
 interesting to consider.


\providecommand{\urlalt}[2]{\href{#1}{#2}}
\providecommand{\doi}[1]{doi:\urlalt{http://dx.doi.org/#1}{#1}}

\bibliographystyle{eptcs}

\end{document}